\documentclass[conference]{IEEEtran}
\IEEEoverridecommandlockouts
\usepackage{cite}
\usepackage{amsmath,amssymb,amsfonts}
\usepackage{amsthm}
\newtheorem{theorem}{Theorem}

\newtheorem{lemma}{Lemma}

\theoremstyle{definition}
\newtheorem{definition}{Definition}
\theoremstyle{remark}
\newtheorem{remark}{Remark}
\newcommand{\sem}[1]{[\![#1]\!]}
\usepackage{algorithmic}
\usepackage{graphicx}
\usepackage{booktabs}
\usepackage{threeparttable}
\usepackage{textcomp}
\usepackage{xcolor}
\usepackage{url}
\usepackage[normalem]{ulem}
\usepackage[most]{tcolorbox}

\definecolor{brandblue}{RGB}{60,120,200}
\tcbset{
  mybox/.style={
    enhanced,
    colback=brandblue!4,
    colframe=brandblue!60!black,
    boxrule=0.6pt,
    arc=1mm,
    left=2mm,
    right=2mm,
    top=1mm,
    bottom=1mm,
    fonttitle=\bfseries,
    coltitle=black,
    attach boxed title to top left={xshift=0mm,yshift*=-\tcboxedtitleheight/2},
    boxed title style={
      colback=white,
      colframe=brandblue!60!black,
      boxrule=0.6pt,
      arc=1mm,
      width=\linewidth,
    },
  }
}

\usepackage{listings}
\lstdefinestyle{cdiff}{
  language=C,
  basicstyle=\ttfamily\scriptsize,
  keywordstyle=\color{blue!65!black},
  commentstyle=\color{green!45!black}\itshape,
  numbers=none,
  showstringspaces=false,
  columns=fullflexible,
  keepspaces=true,
  breaklines=true,
  frame=none,
  xleftmargin=0pt,
  aboveskip=1pt,
  belowskip=1pt,
}
\def\BibTeX{{\rm B\kern-.05em{\sc i\kern-.025em b}\kern-.08em
T\kern-.1667em\lower.7ex\hbox{E}\kern-.125emX}}
\usepackage[colorlinks=true,linkcolor=blue,citecolor=blue,urlcolor=blue,
  pdftitle={Partial Contracts Suffice: Sound, LLM-Inferred Regression
Verification}]{hyperref}

\newif\ifanonymized
\anonymizedfalse

\newcommand{\ourtool}[0]{\textsc{Contractor}}

\begin{document}

\title{Partial Contracts Suffice: Sound, LLM-Inferred Regression
Verification}

\ifanonymized{}
\author{\IEEEauthorblockN{Anonymous Author(s)}}
\else{}
\author{
  \IEEEauthorblockN{Yiannis Charalambous}
  \IEEEauthorblockA{\textit{The University of Manchester} \\
  0009-0000-5755-5099}
  \and
  \IEEEauthorblockN{Rafael Menezes}
  \IEEEauthorblockA{\textit{The University of Manchester} \\
  0000-0002-6102-4343}
  \and
  \IEEEauthorblockN{Youcheng Sun}
  \IEEEauthorblockA{\textit{The University of Manchester} \\
  0000-0002-1893-6259}
  \and
  \IEEEauthorblockN{Lucas C. Cordeiro}
  \IEEEauthorblockA{\textit{The University of Manchester} \\
  0000-0002-6235-4272}
}
\fi

\maketitle

\begin{abstract}
  Software evolves continuously, yet ensuring that a patch preserves
  intended behavior without re-verifying an entire codebase remains
  difficult. \emph{Regression verification} addresses this problem,
  but existing techniques either require expensive whole-program
  reasoning or rely on manually written specifications that are
  rarely available in practice.
  In this work, we present the first contract-based regression
  verification tool. In our approach, the contract soundness is
  ensured by proving that all function versions match the behavior.
  Additionally, to ensure the behavior, the contract is used to
  verify the program flow (i.e., assume-guarantee). We ask whether a
  partial, caller-sufficient contract, rather than a full behavioral
  specification, is enough. On Frama-C-Problems we strengthen each
  inferred contract past what the caller needs and measure how much
  tighter it becomes. It barely moves: for most targets in every
  model the caller-sufficient contract is already the tightest the
  loop reaches, and our tightness comparator rates the partial and
  strengthened contracts equivalent for the large majority of targets
  it can compare (18 of 20, 13 of 15, and 14 of 18 across the three
  models). Partial-spec contracts thus capture nearly all the
  attainable tightness, so stopping at caller-sufficiency costs
  almost nothing. The regression check underneath is sound: on the
  third-party EqBench-C suite it never fabricates an equivalence,
  returning zero false proofs on the pairs it decides and reporting
  an unprovable difference rather than a false proof. Used as a
  soundness oracle this way, it also surfaced nine pairs that EqBench
  mislabels as equivalent, more than a concurrent tool reports. The
  contracts themselves are inferred automatically from the checker's
  own counterexamples, with no separate specification step; on
  Frama-C-Problems and the ANSSI X509 parser this reaches a
  verification rate comparable to the weakest-precondition (WP) tools
  AutoSpec and Preguss, while a passing result certifies at least as
  strong a property, which we call \emph{safety-preserving
  conditional equivalence}: enforcement plus caller-sufficiency, with
  runtime-error freedom covered by the safety encoding.
\end{abstract}

\begin{IEEEkeywords}
  formal verification, software testing, artificial intelligence,
  software safety, large language models
\end{IEEEkeywords}

\section{Introduction}\label{sec:intro}

Software changes constantly, and most changes are meant to be
conservative: a refactor, a bug fix, or a small feature should preserve
the behavior callers already rely on. They don't always manage it. A fix
that corrects one execution path can quietly perturb an output some
other caller reads, and if no test exercises that caller the regression
ships unnoticed; across major operating systems, between 14.8\% and
24.4\% of post-release fixes were themselves found to be
incorrect~\cite{yin2011fixes}.
Catching it means deciding whether the new and old versions still
agree, and there's no general procedure for that: deciding full
input/output equivalence is undecidable~\cite{rice1953}, and even a
bounded check stays expensive, since it has to reason about each
function together with all its side-effects~\cite{godlin2013rv}.

In a bug-fixing or refactoring context, full equivalence asks too much:
the developer means to change some paths and leave the rest, so
requiring agreement everywhere, old bugs and all, rejects the very edit
being made. The useful notions form a spectrum, ordered by how much of
each execution they compare, and the weaker the relation the cheaper the
proof. \emph{Full equivalence} is the strongest: the two versions must
terminate on the same inputs and return the same outputs everywhere,
preserved bugs included. \emph{Partial equivalence} relaxes termination,
asking the versions to agree only on the inputs where the old one
terminates. \emph{Conditional}, or \emph{regression}, \emph{equivalence}
relaxes further still: it fixes a property marking the behavior worth
preserving and asks for agreement only where the old version already
satisfied it. Where the old version violated the property, the new one
is free to differ, and that freedom is exactly where a bug fix lives. We
make the three notions precise in Definition~\ref{def:equiv}.

Whichever notion one targets, a whole-program check re-analyzes every
callee in full at each call site, which does not scale as a codebase
grows. A scalable proof has to be \emph{modular}: each call is
discharged against an \emph{abstraction} $\alpha$ of the callee rather
than its body, so the callee is analyzed once instead of at every use.
For the proof to stay trustworthy that abstraction must be \emph{sound}:
it has to \emph{over-approximate} the callee, admitting every behavior
the function can exhibit, so the check can never miss a real difference.

We take that abstraction to be a \emph{function contract}: a precondition
and a postcondition that pin down just part of the callee's behavior
(Definition~\ref{def:contract}). A verifier \emph{enforces} the contract against
the old version, certifying it as a sound over-approximation, and then
\emph{replaces} each call by it, so one enforced contract serves every
call site without re-entering the body. The contract need only be precise
on the slice of the output the property observes; off that slice it may
stay weak, and looseness there costs completeness, never soundness.
Anchoring it to the old version's safe behavior also folds safety in, so a
newly introduced memory leak or out-of-bounds access counts as a
regression even when every returned value is unchanged.

Contracts like this are almost never written by hand: even
safety-critical code rarely carries machine-checkable per-function
specifications, and what exists tends to be
system-level~\cite{wen2024autospec,granberry2024specifywhat,wang2025preguss}.
We therefore \emph{infer} them with a large language model, which has
recently begun to automate contract authoring, though so far only for a
single version of a
program~\cite{wen2024autospec,granberry2024specifywhat,wang2025preguss}.
The model is never trusted: it \emph{proposes} a contract, the verifier
\emph{enforces} it against the old version, and only an enforced
contract is used, so a wrong guess is rejected rather than believed.
This is the paper's thesis: \textit{a partial, LLM-inferred,
  verifier-certified contract suffices to check regression-equivalence of
C}, a language of large, long-lived systems where a
routine change can silently regress a caller.

Beyond staying safe, a change is a
regression only if something that uses the function can tell.
Figure~\ref{fig:motivating} shows the kind of edit we target.
\texttt{shrink} trims a box by a fixed margin; the old version forgets
to shrink the height, and the new one fixes it. The caller
\texttt{usable\_width} reads only the width, so it returns the same
value before and after the fix. \textit{Asking whether the two \texttt{shrink}
versions are equivalent is the wrong question}: the height really does
change, so a check over the whole of \texttt{*o} reports a difference,
flagging the fix itself. What matters is whether \texttt{shrink}
regresses on the width, the only part \texttt{usable\_width} observes,
and the contract $C$ in the figure captures exactly that. Enforced
against the old \texttt{shrink}, $C$ pins the width
(\texttt{ensures o->w == b->w - 2}) and leaves the height free. The
regression check replaces the old \texttt{shrink} by $C$ and runs the
new \texttt{shrink} concretely, comparing only the width: because $C$ is
precise exactly where the caller looks and silent elsewhere, the two
agree and the height change is invisible. A full specification would
have to pin the height too, the one part the fix changed and no caller
reads, making the proof obligation much harder for the contract author.

\begin{figure}[t]
  \centering
  {\footnotesize\textbf{Caller (unchanged in both versions)}}
\begin{lstlisting}[style=cdiff,numbers=left,numberstyle=\tiny\color{gray},numbersep=4pt]
int usable_width(const box_t *b) {
  box_t o;
  shrink(b, &o);
  return o.w;          // reads w, never h
}
\end{lstlisting}
  \begin{minipage}[t]{0.49\linewidth}
    \centering{\footnotesize\textbf{shrink (OLD)}}
\begin{lstlisting}[style=cdiff,numbers=left,numberstyle=\tiny\color{gray},numbersep=4pt,
  moredelim={[is][\color{red!70!black}\bfseries]{@}{@}}]
void shrink(const box_t *b,
            box_t *o) {
  o->w = b->w - 2;
  @o->h = b->h;@
}
\end{lstlisting}
  \end{minipage}\hfill
  \begin{minipage}[t]{0.49\linewidth}
    \centering{\footnotesize\textbf{shrink (NEW)}}
\begin{lstlisting}[style=cdiff,moredelim={[is][\color{green!50!black}\bfseries]{@}{@}}]
void shrink(const box_t *b,
            box_t *o) {
  o->w = b->w - 2;
  @o->h = b->h - 2;@
}
\end{lstlisting}
  \end{minipage}

  {\footnotesize\textbf{Inferred contract $C$ on \texttt{shrink}
  (enforced against OLD)}}
\begin{lstlisting}[style=cdiff]
requires \valid(b) && \valid(o);
assigns  o->w, o->h;
ensures  o->w == b->w - 2;     // nothing about o->h
\end{lstlisting}
  \caption{A property-relative regression check. The fix corrects
    \texttt{o->h}; the caller \texttt{usable\_width} reads only
    \texttt{o->w}, so it cannot observe the change. The partial contract
    $C$ pins \texttt{o->w} and stays silent on \texttt{o->h}: it is
    enforced against the old \texttt{shrink}
    ($\sem{\mathtt{shrink}_1}\subseteq\sem{C}$) and then replaces it in
    the regression check while the new \texttt{shrink} runs concretely,
    so the two agree on \texttt{o->w}. Comparing the whole of
  \texttt{*o} instead would flag the fix as a difference.}
  \label{fig:motivating}
\end{figure}

We make three contributions:

\begin{itemize}
  \item \textbf{C1 (Soundness).} A verifier-independent theorem that
    proves
    abstracting only the \emph{old} version by a single partial,
    one-sided contract suffices to prove \emph{property-relative
    regression-equivalence}, together with a depth-bounded lift for
    nested calls (\S\ref{sec:methodology}, Theorem~\ref{thm:t1}).

  \item \textbf{C2 (System).} An end-to-end pipeline of three
    tools: \ourtool{} itself scaffolds the regression harnesses,
    \textsc{scribe} runs the LLM-based propose-and-certify
    contract-inference loop,
    and \textsc{regver} discharges the regression-equivalence checking
    against the enforced contracts (\S\ref{sec:methodology}); to our
    knowledge the first to combine LLM-based contract inference with
    regression-equivalence verification.

  \item \textbf{C3 (Relative contract-tightness comparator).}
    \textsc{judge}, a semantic comparator that decides admit-set
    inclusion in both directions between two contracts and returns a
    four-way verdict (equivalent, one strictly tighter, or incomparable)
    with counterexample witnesses; on enforced contracts this yields a
    refinement ordering, so we can measure how tight an inferred
    contract is rather than only whether it verifies (\S\ref{sec:judge},
    Theorem~\ref{thm:t2}).

\end{itemize}

We evaluate C1, C2, and C3 through
two questions, stated and answered in \S\ref{sec:eval}: which
counterexample modality best drives frame-condition inference
(\emph{feedback modality}); and whether targeting a
partial, property-relative contract rather than a full specification
makes inference more tractable, and where its returns plateau
(\emph{tractability}).

The remainder of the paper is organized as follows.
\S\ref{sec:background} sets out the relational semantics, the three
equivalence notions, and function contracts with the frame rule;
\S\ref{sec:related} surveys related work; \S\ref{sec:methodology}
presents the soundness theorem (Theorem~\ref{thm:t1}) and its
depth-bounded lift, the system, and the contract-tightness procedure
(Theorem~\ref{thm:t2});
\S\ref{sec:eval} reports the experimental evaluation and analysis,
including threats to validity (\S\ref{sec:limitations}); and
\S\ref{sec:conclusion} concludes with open directions for future work.

\section{Preliminaries}\label{sec:background}

\subsection{Program semantics}\label{subsec:semantics}
\begin{definition}[Program semantics]\label{def:semantics}
  We model each version of a function as a relation
  $\sem{f}\subseteq X\times Y$ from input pre-states $X$ to output
  post-states $Y$, where $(x,o)\in\sem{f}$ means some terminating run of
  $f$ on $x$ yields $o$. The pre-state $X$ subsumes the responses of any
  external nondeterministic callee the two versions share,
  so a single $x$
  resolves those responses identically on both sides; only
  nondeterminism internal to the function itself leaves $\sem{f}(x)$
  multi-valued. We write
  $\sem{f}(x)\triangleq\{o\mid(x,o)\in\sem{f}\}$ for the set of outputs
  on $x$, which is empty when $f$ does not terminate on $x$.
\end{definition}
The spaces $X$ and $Y$ range over a common store, so a pre-state and a
post-state can be compared; this is what lets the frame clause below
speak of the locations a function leaves unchanged.

\subsection{Equivalence notions}\label{subsec:equiv}

Fix an old version $f_1$ and a new version $f_2$. The three notions
stated informally in \S\ref{sec:intro} order by how much of each
execution they compare~\cite{godlin2009rv,godlin2013rv}.

\begin{definition}[Equivalence notions]\label{def:equiv}
  \begin{itemize}
    \item \emph{Full equivalence}:
      $\sem{f_1}=\sem{f_2}$, so the versions terminate on the same inputs
      and produce the same outputs everywhere, old bugs included.
    \item \emph{Partial equivalence}
      relaxes termination: $\sem{f_1}(x) \neq \emptyset \Rightarrow
      \sem{f_1}(x)=\sem{f_2}(x)$, so only inputs on which $f_1$ terminates
      need to agree.
    \item \emph{Conditional (regression)
      equivalence} relaxes further to a
      property $\phi$ marking the behavior worth preserving (we write
      $\sem{f_1}(x) \models \phi$ when $f_1$'s runs on $x$ satisfy it),
      asking for agreement only where the old version was already correct:
      $\sem{f_1}(x) \models \phi \Rightarrow \sem{f_1}(x) = \sem{f_2}(x)$.
      Where $f_1$ violated $\phi$ the new version is free to differ, and
      that freedom is where the fix lives.
  \end{itemize}
\end{definition}

We target this conditional notion throughout. \S\ref{sec:methodology}
makes it precise on an explicit in-scope set and relaxes the output
equality to a property-relative obligation $E_\phi$, recovering the
definition above as the special case where $\phi$ observes the whole
output. We denote this relation we target \textit{safety-preserving
conditional equivalence}.

\subsection{Contracts and the Frame Rule}\label{subsec:contracts}
\begin{definition}[Function contract and admit set]\label{def:contract}
  A \emph{function contract}
  is a Hoare-style
  triple~\cite{hoare1969axiomatic,meyer1992dbc}: a precondition
  (\texttt{requires}), a postcondition (\texttt{ensures}), and a frame
  (\texttt{assigns}) naming the locations the function may modify. We write
  $C\triangleq(\text{Pre},\text{Post},\text{Assigns})$, with
  $\text{Pre}\subseteq X$ a set of input states and $\text{Post}\subseteq
  X\times Y$ relating inputs to outputs over the state spaces $X,Y$ defined
  above; throughout, we identify a set such as $\text{Pre}$ with its
  characteristic predicate, writing $\text{Pre}(x)$ for $x\in\text{Pre}$.
  Read in that relational model, the contract denotes
  its \emph{admit set}, the input/output pairs it permits,
  \[
    \begin{aligned}
      &\sem{C}\;\triangleq\;\{(x,o)\in X\times Y \mid \text{Pre}(x)\Rightarrow
        (\text{Post}(x,o)\ \wedge{}\\
      &\quad o\ \text{agrees with}\ x\ \text{outside}\ \text{Assigns})\},
    \end{aligned}
  \]
  with image $\sem{C}(x)$ the outputs it allows on $x$; outside its
  precondition the contract permits anything, so $\sem{C}(x)=Y$ when
  $\neg\text{Pre}(x)$.
\end{definition}
A modular verifier supports two operations on such a contract, used here
only through their effect on $\sem{C}$.
\emph{Enforcement} checks that the body of $f$ satisfies its
contract: every execution from a \texttt{requires}-state ends in an
\texttt{ensures}-state and writes only \texttt{assigns} locations. This
is exactly the inclusion $\sem{f}\subseteq\sem{C}$, so a passing
enforcement makes $C$ an over-approximation of $f$.
\emph{Replacement} abstracts each call to $f$ by the contract: the
call site asserts \texttt{requires}, havocs the locations named in
\texttt{assigns}, and assumes \texttt{ensures}, so the call ranges over
$\sem{C}(x)$ in place of $\sem{f}(x)$; since $\sem{f}\subseteq\sem{C}$
once enforced, a proof under replacement is sound for the real $f$.

Replacement is an
over-approximation: a proof obtained under it is sound for the real
function, but a reported failure may be \emph{spurious}, because
havocking the assigned locations can hand the caller an output the real
body never produces. The frame clause is what keeps replacement precise.
Locations not named in \texttt{assigns} are snapshotted before the body
and asserted unchanged after, so without an explicit \texttt{assigns} the
verifier must havoc every reachable location, and the abstraction becomes
too coarse to support a non-trivial equivalence proof. One caveat keeps
enforcement honest: if a contract's \texttt{requires} can never hold, the
body is never entered and the check passes \emph{vacuously}, testing
nothing. Our pipeline guards against this by reporting such a proof as
inconclusive rather than as a success, so the soundness of enforcement
never rests on an empty scope.

\section{Related Work}\label{sec:related}

\subsection{Regression Verification and Equivalence Checking}

Regression Verification Tool (RVT)~\cite{godlin2009rv,godlin2013rv}
pioneered modular regression verification, abstracting equivalent
matched callees with shared uninterpreted functions so the proof
holds for any consistent interpretation. Reve /
LLREVE~\cite{felsing2014reve,klebanov2018reve} encodes paired
programs as constrained Horn clauses and infers coupling predicates
relating the two versions' states with an external solver (Eldarica,
Z3), while SymDiff~\cite{lahiri2012symdiff} and
DAC~\cite{lahiri2013dac} relate the two versions by a relational
summary in the mutual-summary tradition~\cite{hawblitzel2013mutual}:
a single predicate over the disjoint union of both versions' inputs
and outputs that characterizes their behavioral relationship.
ARDiff~\cite{badihi2020ardiff} prunes common code by iterative
abstraction-refinement, but only for Java; PEQcheck and
PEQtest~\cite{jakobs2021peqcheck,jakobs2022peqtest} target localized
refactorings through segment-level encodings; and
PASDA~\cite{glock2024pasda} adds partition-based heuristics for the
undecided cases. Differential symbolic
execution~\cite{person2008dse,palikareva2016shadow,spatch2024,cozy2025}
characterizes program differences by exploring both versions'
symbolic state spaces, usually returning concrete divergence-inducing
inputs but no reusable artifact, as does LLM-driven test generation:
Mokav~\cite{etemadi2024mokav} pairs it with differential testing for
behavioral separation, and UnitTenX~\cite{charalambous2025unittenx}
drives agents over an ESBMC backend to build unit tests for legacy
code, both producing tests rather than a proof certificate.
Concurrent work by Sarker
et~al.~\cite{sarker2026quantitativesymbolicpatchimpact} takes
differential symbolic execution in an orthogonal direction,
\emph{quantifying} how much of the input domain a patch changes
rather than proving equivalence. Run as a soundness oracle on
EqBench-C our checker surfaced more mislabeled \texttt{Eq} pairs than
they report (nine versus five), while we target a sound modular
equivalence proof, not an impact measure.

Throughout this line the callee abstraction is supplied or derived from
both bodies in lockstep, presupposing a complete characterization of
what is abstracted. The closest exception is regression verification by
impact summaries~\cite{backes2013impact}, which restricts the proof to
change-relevant behavior: a control- and data-dependence analysis marks
the impacted statements and keeps only their path-condition constraints
from a whole-program symbolic execution. That summary is exact, so the
check is sound and complete to the depth bound, but for the same reason
the analysis must see all the code it reasons about. We instead
\emph{abstract}
the old version alone by a weak one-sided contract (one constraining the
old version, never the new), \textit{a reusable and enforceable artifact}
rather
than a path-condition partition tied to one diff, \textit{scoped to what the
property observes} rather than to what the change touches, and discharged
once bottom-up then reused at every call site and in every later check
(\S\ref{sec:methodology}).

\subsection{Contract, Invariant, and Frame Inference}

Inferring functional specifications with LLMs has advanced quickly. The
closest are ACSL-for-C tools validated by Frama-C: AutoSpec's
accept-or-reject WP feedback~\cite{wen2024autospec}, ``Specify What?''
priming on EVA and Pathcrawler data~\cite{granberry2024specifywhat}, and
Preguss's iterative runtime-error-alarm refinement~\cite{wang2025preguss},
the nearest analogue to our BMC-trace-driven loop on a Frama-C backend.
Other LLM-based generate+validate
loops~\cite{tihanyi2023esbmcai,richter2025beyondpost,pirzada2026converusingcontractsloop}.
Non-LLM inference~\cite{amilon2025autodeduct,amilon2024mindreading} and
symbolic frame
inference~\cite{rakamaric2008frames,calcagno2009biabduction,quiver2024}
predate these; we contribute the BMC-backed counterpart, with concrete-trace
feedback drawn from ESBMC's frame-rule pass.

Two gaps separate this work from all of the above. First, every such
tool targets a single version; none reuses the inferred contract as a
callee abstraction in a two-version regression-equivalence proof, the
gap we close. Second, they judge a specification by whether it
verifies, not how tight it
is~\cite{granberry2024specifywhat,beg2026acsleval}; to our knowledge
none defines a semantic notion of contract tightness or a decision
procedure for contract subsumption, which \S\ref{sec:judge}
introduces. The CE-guided LLM loop is by now established for
invariants~\cite{lam4inv2024,clause2inv2025,bali2025,su2026cill,dafnypro2026,dafnyannotator2024,daisy2025,wei2025quokka}
and descends from counterexample-guided inductive
synthesis~\cite{sketch2008,garg2016icedt,padhi2016pie,si2018code2inv},
but those target invariants, not the function-level frame conditions we infer.

\section{Methodology}\label{sec:methodology}

\subsection{One-Sided Contracts for Regression Equivalence (C1)}

Recall from Definition~\ref{def:semantics} that each version is a
relation $\sem{f}\subseteq X\times Y$ of terminating runs, with image
$\sem{f}(x)$, and from Definition~\ref{def:contract} that a contract
denotes a relation $\sem{C}\subseteq X\times Y$, with image $\sem{C}(x)$
the outputs it allows on $x$. The abstraction $\alpha$ of
\S\ref{sec:intro}
is just a contract: taking $\sem{\alpha}=\sem{C}$, its
over-approximation requirement $\sem{f_1}\subseteq\sem{\alpha}$ becomes
$\sem{f_1}\subseteq\sem{C}$.

\begin{definition}[Scope and equivalence obligation]\label{def:scope}
  A property $\phi$ fixes a \emph{scope}
  $\mathit{Sc}\triangleq\{x\in X\mid\sem{f_1}(x)\models\phi\}$ of
  in-scope inputs (the $x$ on which every run of $f_1$ stays within
  $\phi$, e.g.\ runs safely), and an \emph{equivalence obligation}
  $E_\phi(o_1,o_2)$ that holds when $o_1$ and $o_2$ agree on the part of
  the output $\phi$ observes (for example, the return value and the
  locations a caller reads).
\end{definition}
We work under partial correctness and three standing assumptions, made
explicit because the theorem rests on them: (A1) only terminating runs
populate $\sem{\cdot}$, so the claim is over inputs on which both
versions terminate~\cite{godlin2013rv} (in particular, a change that
  makes $f_2$ diverge where $f_1$ terminated is outside the guarantee, and
detecting such termination regressions is left to future work); (A2) the
frame rule is
sound (a body that passes enforcement writes no location outside
$\text{Assigns}$, even under aliasing); and
(A3) the verifier that
discharges enforcement and replacement is sound, so every verdict it
certifies holds in the relational model (Definition~\ref{def:contract}).
External nondeterminism
needs no separate assumption: a shared callee's responses are part of
the input $x$ (Definition~\ref{def:semantics}), so they are resolved
identically on both sides by construction and cannot themselves induce a
divergence.

\begin{definition}[Contract preserving
  regression-equivalence]\label{def:preserves}
  Given two versions $f_1, f_2$ of a function and a property $\phi$, a
  contract $C$ \emph{preserves regression-equivalence with respect to
  $\phi$} iff:
  \begin{enumerate}
    \item \textbf{Enforcement.} $\sem{f_1}\subseteq\sem{C}$: $C$
      over-approximates $f_1$, so every run of $f_1$ is one $C$ permits.
    \item \textbf{Sufficiency for $\phi$.} For every $x\in\mathit{Sc}$,
      every $o\in\sem{C}(x)$, and every $o_2\in\sem{f_2}(x)$, we have
      $E_\phi(o,o_2)$: every output the replacement of $f_1$ by $C$ may
      produce meets the regression obligation against $f_2$.
  \end{enumerate}
\end{definition}

Condition~2 mirrors replacement: the abstracted call ranges over exactly
$\sem{C}(x)$, the outputs $C$ allows on $x$. We need not separately
assume $\mathit{Sc}\subseteq\text{Pre}$: dropping it only strengthens
condition~2's hypothesis (outside $\text{Pre}$, $\sem{C}(x)=Y$, so the
obligation ranges over more outputs, never fewer), and the soundness
proof never relies on $\mathit{Sc}\subseteq\text{Pre}$.
\begin{definition}[$\phi$-regression-equivalence]\label{def:regequiv}
  $f_2$ is \emph{$\phi$-regression-equivalent} to $f_1$ iff for every
  $x\in\mathit{Sc}$, every $o_1\in\sem{f_1}(x)$, and every
  $o_2\in\sem{f_2}(x)$ the obligation $E_\phi(o_1,o_2)$ holds.
\end{definition}
For
deterministic functions (each $\sem{f}(x)$ a single output, shared
  external nondeterminism already folded into $x$ by
Definition~\ref{def:semantics}), this
coincides with the conditional equivalence
$\sem{f_1}(x)\models\phi\Rightarrow\sem{f_1}(x)=\sem{f_2}(x)$ of
Definition~\ref{def:equiv}, equality relaxed to $E_\phi$; for genuinely
multi-valued $\sem{f}(x)$
the pairwise form is strictly stronger.


Intuitively, the contract only needs to over-approximate the old
version, while being precise on the behavior observed by callers.
This allows us to replace the old implementation during verification
without introducing unsoundness.

\begin{theorem}[Soundness of one-sided contract abstraction]\label{thm:t1}
  Let $f_1,f_2$ be two versions of a function, $C$ a contract, and $\phi$
  a property with scope $\mathit{Sc}$ and obligation $E_\phi$. If $C$
  preserves regression-equivalence with respect to $\phi$
  (Definition~\ref{def:preserves}), then $f_2$ is
  $\phi$-regression-equivalent to $f_1$ (Definition~\ref{def:regequiv}).
\end{theorem}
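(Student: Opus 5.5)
The proof is a direct unfolding of the two definitions, with enforcement used to move from the contract's admit set back to the real old version. Fix arbitrary $x\in\mathit{Sc}$, $o_1\in\sem{f_1}(x)$ and $o_2\in\sem{f_2}(x)$, as Definition~\ref{def:regequiv} requires; the goal is $E_\phi(o_1,o_2)$. If either image is empty, the quantification is vacuous and there is nothing to prove. This is exactly the partial-correctness reading fixed by assumption (A1), so no termination argument enters.

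The first step is to transport $o_1$ into the contract's admit set. Since $o_1\in\sem{f_1}(x)$, we have $(x,o_1)\in\sem{f_1}$. Enforcement (condition~1 of Definition~\ref{def:preserves}) gives $\sem{f_1}\subseteq\sem{C}$, hence $(x,o_1)\in\sem{C}$, i.e.\ $o_1\in\sem{C}(x)$. I will note that this inclusion is purely relational. It does not matter whether $\text{Pre}(x)$ holds: if it fails, $\sem{C}(x)=Y$ by Definition~\ref{def:contract}, and membership is trivial. This is why, as remarked after Definition~\ref{def:preserves}, no assumption $\mathit{Sc}\subseteq\text{Pre}$ is needed.

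The second step instantiates sufficiency (condition~2 of Definition~\ref{def:preserves}) at this same $x\in\mathit{Sc}$, with $o:=o_1\in\sem{C}(x)$ and the given $o_2\in\sem{f_2}(x)$. That yields $E_\phi(o_1,o_2)$. Because $x$, $o_1$ and $o_2$ were arbitrary, $f_2$ is $\phi$-regression-equivalent to $f_1$.

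There is no real mathematical obstacle here; the only delicate point is bookkeeping about what the relational statement covers. I will stress that the scope $\mathit{Sc}$ is defined from $f_1$ alone and is the same set in both conditions. I will also stress that $E_\phi$ is applied with the old output in the first argument, matching the orientation in both definitions. Finally, the bridge from the abstract inclusion $\sem{f_1}\subseteq\sem{C}$ to an actual verifier run is provided by assumptions (A2) and (A3), not by this proof. I will remark that the theorem is verifier-independent precisely because it uses enforcement and replacement only through their effect on $\sem{C}$.
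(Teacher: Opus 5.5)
Your proposal is correct and follows essentially the same route as the paper's proof: enforcement places $o_1$ in $\sem{C}(x)$, and instantiating sufficiency at $o:=o_1$ with the given $o_2$ yields $E_\phi(o_1,o_2)$. Your side remarks are accurate but not needed for the argument; they cover vacuity when an image is empty, the irrelevance of $\text{Pre}(x)$, and the role of (A2)--(A3).
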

\begin{proof}
  Fix $x\in\mathit{Sc}$, $o_1\in\sem{f_1}(x)$, and
  $o_2\in\sem{f_2}(x)$. By condition~1, $\sem{f_1}\subseteq\sem{C}$, so
  $\sem{f_1}(x)\subseteq\sem{C}(x)$ and $o_1\in\sem{C}(x)$. Instantiate
  condition~2 with this $x$, the witness $o\mathbin{:=}o_1$, and the
  given $o_2$: its premises $o_1\in\sem{C}(x)$ and $o_2\in\sem{f_2}(x)$
  both hold, so $E_\phi(o_1,o_2)$. As $x,o_1,o_2$ were arbitrary, $f_2$
  is $\phi$-regression-equivalent to $f_1$.
\end{proof}

The proof turns on one step: enforcement
(condition~1) places the real
output $o_1$ inside $\sem{C}(x)$, and replacement (condition~2)
quantifies the obligation over all of $\sem{C}(x)$, so it specializes
to $o_1$. Condition~1 is what makes the proof sound; condition~2 is the
only place $C$ must be tight, and everywhere else it may stay weak.
The relational-summary abstractions of
SymDiff~\cite{lahiri2012symdiff} and DAC~\cite{lahiri2013dac} and the
shared-uninterpreted-function abstraction of RVT~\cite{godlin2013rv} do
not allow this freedom, since they need a complete characterization of
the procedure. The remaining obligations are discharged by the standing
assumptions (pointer aliasing across $\text{Assigns}$ and
non-termination) and by the semantics (external nondeterminism is folded
into the shared input, Definition~\ref{def:semantics}); these are
standard for Hoare reasoning with a frame rule. \textit{To our knowledge this is
  the first such formalization for a one-sided contract inside a
  regression query, and it is what makes the contract small enough for an
LLM to infer}.

How our pipeline discharges conditions~1--2 as two concrete ESBMC queries
is described in \S\ref{subsec:system}.

Replacement is an over-approximation, so the converse of
Theorem~\ref{thm:t1} can fail: the equivalence check may report a
\emph{spurious} divergence, a witness $o\in\sem{C}(x)$ that $C$ permits
but $f_1$ never produces and on which $E_\phi$ fails. This is exactly
why condition~2 requires $C$ to be precise on the
$\phi$-observed output: looseness there admits an observably-disagreeing
$o$ and condition~2 fails even when $f_1$ and $f_2$ genuinely agree
(lost completeness, not soundness), whereas looseness off the observed
projection only enlarges $\sem{C}$ on locations $E_\phi$ ignores and
is harmless. That harmless region is the partial-spec freedom, and it
keeps the LLM step practical.

\begin{remark}[Non-vacuity is machine-discharged]\label{rem:vacuity}
  Theorem~\ref{thm:t1} is content-free if $\mathit{Sc}=\varnothing$:
  an unsatisfiable $\text{Pre}$ makes condition~1 hold vacuously and
  the conclusion vacuously true. Our approach does not rely on
  the user to rule this out. Instead it employs a vacuity probe to ensure
  that the contract conditions are not vacuously
  true~\cite{beer1997efficient}.
\end{remark}

\subsection{Depth-Bounded Contract Closure}\label{sec:depth-bounded-closure}

Theorem~\ref{thm:t1} governs the boundary between one abstracted procedure and
its caller.
Regression targets in legacy code transitively touch
hundreds of helpers, and inferring partial contracts for all of them
in a single LLM pass is intractable. We extend T1 by bounding the
contract closure to a configurable depth $N$ from the seed target
$g_0$. Let $d(g)$ be the depth of $g$, its distance from the seed $g_0$.
Functions at distance $d(g) < N$ along the static call graph
receive precise partial contracts of T1's shape; functions at $d(g) =
N$ receive deliberately weak \emph{boundary contracts}
$C^{\partial}_g$ (precondition \texttt{true}, postcondition \texttt{true},
and a frame that havocs every location reachable from the call);
functions at $d(g) > N$
receive no contract and are not analyzed in this pass. Under
contract replacement, boundary contracts
substitute at depth-$N$ call sites and deeper bodies are never
expanded, so the cost of verifying $g_0$ is bounded by the depth-$N$
closure independently of the full callee depth.

Write $\sem{g\mid\sigma}$ for the denotation of $g$ under a \emph{callee
environment} $\sigma$ that replaces each callee $h$ of $g$ by a relation.
Here $(\,\cdot\,)_h$ is the family indexed by the callees $h$ of $g$, so
$\sem{g\mid(\sem{h})_h}=\sem{g}$ sends each callee to its own denotation and
$\sem{g\mid(\sem{C_h})_h}$ replaces every callee by its contract. The
semantics is \emph{monotone} in $\sigma$ (a callee
relation occurs only positively): $\sigma\subseteq\sigma'$ pointwise implies
$\sem{g\mid\sigma}\subseteq\sem{g\mid\sigma'}$.

\begin{lemma}[Depth-$N$ closure soundness]\label{lem:depth}
  With contracts assigned as above, suppose
  \begin{enumerate}
    \item[(i)] \emph{(boundary)} each $C^{\partial}_g$ at $d(g)=N$ havocs a
      superset of $g$'s footprint, so
      $\sem{g}\subseteq\sem{C^{\partial}_g}$; and
    \item[(ii)] \emph{(interior)} each $g$ at $d(g)<N$ passes enforcement
      against its callees' contracts,
      $\sem{g\mid(\sem{C_h})_h}\subseteq\sem{C_g}$.
  \end{enumerate}
  Then $\sem{g_0}\subseteq\sem{C_{g_0}}$: the seed contract
  over-approximates the
  seed without expanding any body below depth $N$.
\end{lemma}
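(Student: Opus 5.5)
We argue by induction on the call-graph depth, proving the stronger claim that every function $g$ with $d(g)\le N$ satisfies $\sem{g}\subseteq\sem{C_g}$. Here $C_g$ denotes the precise partial contract when $d(g)<N$ and the boundary contract $C^{\partial}_g$ when $d(g)=N$. The lemma is the instance $g=g_0$, whose depth is $0$. The induction runs \emph{downward} in depth, from $N$ to $0$, so that when we treat a function at depth $k$ the claim is already established for every function at depth $k+1$. A callee $h$ of $g$ with $d(g)=k<N$ has $d(h)\le k+1\le N$, so it is covered by the hypothesis, provided depth is taken as shortest-path distance from $g_0$. If $h$ is reachable by a shorter path, $d(h)\le k$, and the argument becomes a well-founded induction on the ordering of contracts already established; I address this below.

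\textbf{Base case ($d(g)=N$).} Hypothesis~(i) gives $\sem{g}\subseteq\sem{C^{\partial}_g}$ directly. No body below depth $N$ is ever consulted.

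\textbf{Step ($d(g)<N$).} Assume $\sem{h}\subseteq\sem{C_h}$ for every callee $h$ of $g$. Then the environment $(\sem{h})_h$ is pointwise included in $(\sem{C_h})_h$. Monotonicity of $\sem{g\mid\cdot}$ yields
\[
\sem{g}=\sem{g\mid(\sem{h})_h}\subseteq\sem{g\mid(\sem{C_h})_h}.
\]
Hypothesis~(ii) then gives $\sem{g\mid(\sem{C_h})_h}\subseteq\sem{C_g}$, and chaining the two inclusions gives $\sem{g}\subseteq\sem{C_g}$. Instantiating at $g_0$ concludes the proof. The verifier only ever checks $g_0$ against contracts at depth at most $N$, which justifies the clause about not expanding bodies below depth $N$.

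\textbf{Main obstacle: cyclic call graphs.} The delicate step is making the induction well-founded when the static call graph has cycles, that is, recursion or callees reachable at several depths. For a recursive $g$, the hypothesis on a callee may be exactly the claim for $g$ itself. My first attempt is to read $\sem{g}$ as the least fixed point of the monotone operator $\sigma\mapsto\sem{g\mid\sigma}$, under the partial-correctness reading of assumption (A1). The argument then becomes fixpoint induction: the family $(\sem{C_g})_g$ is a pre-fixed point by~(ii) and~(i), and the least fixed point lies below any pre-fixed point, via Knaster--Tarski or Park induction. This recovers the acyclic argument as a special case. If that framing is unavailable, I would instead restrict the lemma to acyclic closures and note that recursive cycles must be cut by a boundary contract.
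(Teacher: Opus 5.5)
Your proof is correct. Its base case and inductive step (monotonicity of $\sem{g\mid\cdot}$ chained with (ii)) are the same as the paper's; you differ in how you make the induction well-founded. The paper inducts along the call relation of the depth-$\le N$ closure, callee before caller, with boundary nodes minimal by (i). It handles recursion by citing the standard partial-correctness contract rule under (A1).

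Your first choice, ordering by depth, is not a valid induction order even on acyclic graphs. With $g_0\to p\to q\to r$ and $g_0\to r$, the depth-$2$ node $q$ calls the depth-$1$ node $r$. This is not a cyclicity problem, though, and needs no fixpoint machinery: the paper's callee-before-caller order already handles it.

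Your Park-induction argument is where you go beyond the paper. It covers self- and mutual recursion among interior functions uniformly, and it is exactly the justification behind the contract rule the paper invokes. So your fallback of restricting to acyclic closures is unnecessary. (A1) is what licenses the least-fixed-point reading: each terminating run has finite recursion depth, so it appears in a finite iterate from $\emptyset$.

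To make the pre-fixed-point claim precise, define the operator on the closure only, holding each boundary $g$ at its true denotation $\sem{g}$, since its callees lie outside the closure and carry no contracts. Then check that the least fixed point $\hat\sigma$ agrees with the true semantics on interior nodes:
\begin{itemize}
\item $\hat\sigma$ lies below the true semantics, because the true semantics is a fixed point of the restricted operator.
\item Extending $\hat\sigma$ by the true denotations of the functions below depth $N$ gives, by monotonicity, a pre-fixed point of the whole-program operator. This yields the reverse inclusion.
\end{itemize}
This step matters when a boundary function calls back into the interior. With it, $(\sem{C_g})_g$ is pre-fixed by (ii) at interior nodes and by (i) at boundary nodes, exactly as you claim.
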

\begin{proof}
  Induct on the call relation of the depth-$\le N$ closure, each callee before
  its caller. \emph{Base.} A boundary node ($d(g)=N$) is covered by~(i),
  $\sem{g}\subseteq\sem{C^{\partial}_g}$, with no enforcement query.
  \emph{Step.} For an interior $g$, every callee $h$ has $d(h)\le d(g)+1\le N$,
  so $h$ lies in the closure with a contract $C_h$; the induction hypothesis
  gives $\sem{h}\subseteq\sem{C_h}$, monotonicity gives
  $\sem{g}=\sem{g\mid(\sem{h})_h}\subseteq\sem{g\mid(\sem{C_h})_h}$, and~(ii)
  gives $\sem{g}\subseteq\sem{C_g}$. The order is well-founded when the closure
  is acyclic; an interior recursive $g$ replaces its self-call by $C_g$ in~(ii),
  sound by the standard partial-correctness contract rule (A1 makes each
  terminating run's recursion finite). Take $g=g_0$.
\end{proof}

Lemma~\ref{lem:depth} supplies condition~1 of T1 for $g_0$ at the cost of the
depth-$N$ closure alone; condition~2 is discharged by the seed's equivalence
harness (\S\ref{subsec:system}), independent of $N$. \textit{Boundary
  contracts need no
  enforcement step, so deepening the boundary trades precision, not
  soundness, and
T1's partial-spec relaxation carries over to each interior $C_g$}. Iterative
deepening re-seeds a boundary function with bound $N$, turning its case~(i) into
case~(ii) one stratum deeper; the single knob $N$ trades inferred-contract
precision against verification scalability.

\subsection{System Architecture (C2)}\label{subsec:system}

Our tools realize the method as a three-stage pipeline.
\ourtool{} parses the C project with tree-sitter, identifies the
target functions, and scaffolds their harnesses. \textsc{scribe} then
runs the LLM contract-inference loop over the interior functions through
a JSON-schema interface, closing the depth-$N$ frontier with the
mechanical havoc boundary contracts of \S\ref{sec:depth-bounded-closure}.
Finally, \textsc{regver} stages the old version, either from a git commit or a
directory snapshot, and renames its symbols on a temporary copy so that
both versions can coexist in a single harness.

From the enforced contracts, \textsc{regver} compiles condition~2 of
Definition~\ref{def:preserves} into one equivalence harness. On a shared
nondeterministic input $x\in\mathit{Sc}$ it replaces the old version by
its contract: the call site asserts \texttt{requires}, havocs
\texttt{assigns}, and assumes \texttt{ensures}, so the result ranges over
$\sem{C}(x)$. It then runs the new version $f_2$ concretely and asserts
$E_\phi$ between the two outputs. That single verification condition is
exactly condition~2, while enforcement (condition~1) is a separate query
placing every real $f_1$-output inside $\sem{C}(x)$. The check thus
reduces to two ESBMC queries, and $f_1$'s body is never expanded inside
the equivalence proof.

Theorem~\ref{thm:t1} is stated over an abstract contract checker, and
we instantiate its enforcement and replacement by ESBMC's enforce and
replace modes, whose documented replace-mode behavior supplies the
over-approximation the proof relies
on~\cite{pirzada2026converusingcontractsloop}. The theorem leaves the
post-state property $\phi$ abstract, and in the implementation we fix
it to the safety properties ESBMC encodes on every
harness---\textbf{memory safety (spatial bounds and pointer- and
  dereference-validity, covering use-after-free), division by zero,
  arithmetic over- and underflow, memory leaks, and undefined-behavior
shifts}.
The same instantiation inherits the non-vacuity guard of
Remark~\ref{rem:vacuity} from ESBMC's vacuity probe.

Inference is driven by the very counterexamples the verifier already
produces. When ESBMC enforces a proposed contract $C^{(0)}$ against
$f_1$'s body and the frame check fails, its frame-rule pass recovers
the offending write (file, line, and address expression) from the
counterexample trace, and \textsc{scribe} feeds that trace, not a
bare verification-condition failure, into the next prompt.

\subsection{Semantic Contract-Tightness Comparison (C3)}\label{sec:judge}

The published LLM contract-inference baselines
(AutoSpec~\cite{wen2024autospec}, ``Specify
  What?''~\cite{granberry2024specifywhat},
Preguss~\cite{wang2025preguss}) do not measure contract tightness
semantically: \textit{they report verification success rate, or at
most count annotations by type}. Any such syntactic proxy is
inflatable: \textit{a trivially true postcondition adds to the count
without adding constraint}. We replace it with a decision procedure
for contract subsumption.

For contracts sound $A, B$ on a common function over the state spaces $X, Y$
of Definition~\ref{def:semantics}, tightness is set inclusion on their admit
sets: $A \le_{\text{tight}} B \iff \sem{A} \subseteq \sem{B}$. \textsc{judge}
(C3) decides this ordering, splitting $\sem{C}$ into its postcondition
and frame parts. The postcondition part is two validity checks:
\begin{itemize}
  \item \textbf{Check~1:}\;\; $\forall x, o.\; \text{admits}_A(x, o)
    \Rightarrow \text{admits}_B(x, o)$ \quad (is $\sem{A} \subseteq \sem{B}$?)
  \item \textbf{Check~2:}\;\; $\forall x, o.\; \text{admits}_B(x, o)
    \Rightarrow \text{admits}_A(x, o)$ \quad (is $\sem{B} \subseteq \sem{A}$?)
\end{itemize}
where $\text{admits}_C(x, o) \equiv \neg\,\text{Pre}_C(x) \lor
\text{Post}_C(x, o)$ is membership in $\sem{C}$ with the frame conjunct
dropped (a superset test); frame containment on $\text{Assigns}$ is
checked separately (syntactically in v1, see
\S\ref{sec:limitations}). When $A$ and $B$
share their precondition and frame (the warm-start,
partial-versus-strengthened case Theorem~\ref{thm:t2} ranges over), a
passing Check~1/Check~2 establishes $\sem{A}\subseteq\sem{B}$. These
checks are sound in that direction but not complete: because
$\text{admits}_C$ drops the frame, a \emph{failing} check can be
spurious (its witness may lie outside the shared precondition/frame
region), so judge may under-report tightness, never over-report it.
Under the same shared-precondition/frame assumption, the four outcomes
of (Check~1, Check~2) classify the pair: $(\checkmark, \checkmark)$
is \emph{equivalent}
($\sem{A} = \sem{B}$); $(\checkmark, \times)$ is \emph{$A$ strictly
tighter}; $(\times, \checkmark)$ is \emph{$B$ strictly tighter};
$(\times, \times)$ is \emph{incomparable}. Each failing check yields a
counterexample witness: a concrete $(x, o)$ admitted by one contract
and rejected by the other. The incomparable case ships two witnesses,
one per direction, surfacing the precise behavior each contract
admits that the other does not. Each check is a pure validity query
over a nondeterministic input and output, with no body of $f$
involved (neither enforcement nor replacement); our implementation
discharges each as a single SMT query.

\textbf{Relative ordering, not absolute completeness.} \textsc{judge}
measures the order between $\sem{A}$ and $\sem{B}$, not how close either
is to $\sem{f}$. A contract it
finds strictly tighter may still loosely over-approximate $f$, or not be
valid for $f$ at all; validity is settled separately by enforcement
(Theorem~\ref{thm:t1}, condition~1), and we always report the two
together. \textit{A verdict between two enforced contracts is thus a statement
about sound contracts}, the tighter side admitting strictly fewer
behaviors while still covering all of $f$'s; between unenforced
contracts it is only a relative-over-approximation result, labelled as
such in the evaluation tables.

\textbf{Tightness is the classical refinement
ordering~\cite{liskov1994subtyping,back1998refinement}.} When both
contracts \emph{are} enforcement-checked, the ordering says more than
set inclusion: the tighter contract proves at least as much,
discharging every obligation the looser one does. We state this as an
extension of Theorem~\ref{thm:t1}, reusing its scope $\mathit{Sc}$,
obligation $E_\phi$, and condition~2 (sufficiency). We keep
Theorem~\ref{thm:t1}'s scope $\mathit{Sc}$ unchanged: the hypothesis
$\sem{A}\subseteq\sem{B}$ already gives $\sem{A}(x)\subseteq\sem{B}(x)$
at every $x$, so no narrowing to the preconditions is needed (in the
  warm-start, partial-versus-strengthened comparison $A$ and $B$ share
$\text{Pre}$ anyway).

\begin{theorem}[Tightness refines proof power]\label{thm:t2}
  Let $A, B$ be contracts for $f_1$ that both pass enforcement
  ($\sem{f_1}\subseteq\sem{A}$ and $\sem{f_1}\subseteq\sem{B}$), with
  $\sem{A}\subseteq\sem{B}$ (judge's Check~1: $A$ is at least as tight as
  $B$). Then every regression-equivalence query against $f_2$ that
  $\text{replace}(B)$ discharges, $\text{replace}(A)$ discharges
  too (condition~2 of Theorem~\ref{thm:t1} for $B$ implies condition~2
  for $A$), and the resulting proof is sound for the real $f_1$ and
  $f_2$.
\end{theorem}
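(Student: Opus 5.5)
The plan has two parts: first show that condition~2 transfers from $B$ to $A$, then invoke Theorem~\ref{thm:t1} for soundness. The key fact is that condition~2 of Definition~\ref{def:preserves} is \emph{antitone} in the admit set. It universally quantifies over $o\in\sem{C}(x)$, so shrinking $\sem{C}(x)$ can only remove instances of the obligation.

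First, from the hypothesis $\sem{A}\subseteq\sem{B}$ I derive the pointwise image inclusion $\sem{A}(x)\subseteq\sem{B}(x)$ for every $x\in X$. This is immediate from the definition $\sem{C}(x)\triangleq\{o\mid(x,o)\in\sem{C}\}$. I do not intersect with any precondition, as the remark before the statement notes, so the scope $\mathit{Sc}$ stays the one fixed by $\phi$ and $f_1$ in Definition~\ref{def:scope}. It is independent of either contract.

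Second, I transfer condition~2. Fix $x\in\mathit{Sc}$, $o\in\sem{A}(x)$, and $o_2\in\sem{f_2}(x)$. By the first step, $o\in\sem{B}(x)$. Condition~2 for $B$ (the query $\text{replace}(B)$ discharges) then gives $E_\phi(o,o_2)$. Since $x,o,o_2$ were arbitrary, condition~2 holds for $A$. Here I would note that ``$\text{replace}(C)$ discharges the query'' means exactly condition~2 for $C$, by the description of the \textsc{regver} harness in \S\ref{subsec:system}: the harness ranges the old call over $\sem{C}(x)$ and asserts $E_\phi$. Standing assumption (A3) is what licenses reading the verifier's verdict as this semantic statement.

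Third, I conclude soundness. Condition~1 for $A$ is the enforcement hypothesis $\sem{f_1}\subseteq\sem{A}$. Together with the second step, $A$ preserves regression-equivalence with respect to $\phi$, and Theorem~\ref{thm:t1} yields that $f_2$ is $\phi$-regression-equivalent to $f_1$ for the real $f_1$ and $f_2$. The same argument with $B$ shows the conclusion was already obtainable from $B$, so the tighter contract proves at least as much. The main obstacle is not the set reasoning, which is a one-line antitonicity argument. It is making precise the bridge between the operational phrase ``$\text{replace}(\cdot)$ discharges'' and the semantic condition~2, and keeping the caveat that judge's Check~1 only certifies $\sem{A}\subseteq\sem{B}$ when frames and preconditions are shared. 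I would therefore state the theorem's hypothesis as the semantic inclusion itself and invoke Check~1 only as one sufficient way to establish it.
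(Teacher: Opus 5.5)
Your proposal is correct and matches the paper's proof: you derive $\sem{A}(x)\subseteq\sem{B}(x)$ from $\sem{A}\subseteq\sem{B}$, restrict condition~2 for $B$ to outputs in $\sem{A}(x)$ to obtain condition~2 for $A$, and then combine this with $A$'s enforcement via Theorem~\ref{thm:t1}. Your extra remarks are consistent with the paper's surrounding discussion but not needed for the core argument: reading ``$\text{replace}(\cdot)$ discharges'' as condition~2 under (A3), and noting that Check~1 certifies the inclusion only when precondition and frame are shared.
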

\begin{proof}
  Fix $x\in\mathit{Sc}$. Since $\sem{A}\subseteq\sem{B}$, also
  $\sem{A}(x)\subseteq\sem{B}(x)$. Condition~2 for $B$ requires
  $E_\phi(o,o_2)$ for every $o\in\sem{B}(x)$ and $o_2\in\sem{f_2}(x)$;
  restricting to $o\in\sem{A}(x)$ gives condition~2 for $A$. As $A$
  passes enforcement, Theorem~\ref{thm:t1} turns condition~2 for $A$
  into the regression-equivalence of $f_1$ and $f_2$.
\end{proof}

The converse fails, and that is the point: a looser contract proves
no more, and may admit an output $f_1$ never produces that raises a
spurious divergence the tighter one rules out. So for
enforcement-sound contracts judge's verdict is a refinement ordering:
the strictly tighter side discharges at least as many regression
queries, and strictly more whenever it eliminates such a spurious
counterexample.

\section{Evaluation}\label{sec:eval}

We evaluate the contributions of \S\ref{sec:methodology} around two
questions:

\begin{tcolorbox}[mybox,title=RQ1 - Feedback Modality]
  Which counterexample modality most effectively drives frame-condition
  inference: concrete BMC traces, verification-condition failures
  (AutoSpec), or runtime-error traces (Preguss)?
\end{tcolorbox}

\begin{tcolorbox}[mybox,title=RQ2 - Tractability]
  Given that the \textsc{scribe} loop produces a partial contract
  \textbf{(RQ1)}, how much
  additional tightness does continued strengthening recover beyond it,
  and do those returns plateau? This is the practical test of the
  partial-spec freedom of Theorem~\ref{thm:t1}: not whether a partial
  contract is cheaper to infer, but whether it is a good place to stop.
\end{tcolorbox}

The order is deliberate. RQ1 establishes that the inference loop
works at all, and RQ2
then asks whether the partial contract it produces is already tight
enough; RQ2 only makes sense once RQ1 is settled. We describe the
datasets, baselines, and metrics, then report results.

\subsection{Datasets}

\paragraph{EqBench-C.} The C subset of
EqBench~\cite{badihi2021eqbench}: 147 equivalent and 125
non-equivalent program pairs. EqBench is the most comprehensive public
equivalence-checking dataset, built to cover the hard constructs
earlier benchmarks omit (non-linear arithmetic, loops, floating point,
and string and array manipulation), which makes its C subset a
demanding soundness stress-test even though formal C tools have rarely
reported on it. Its pairs are labelled by input-output equivalence of
terminating runs, the same notion \textsc{regver}'s contract-free
harness decides, so the labels are valid ground truth for our
regression check. We use EqBench-C as a soundness baseline
for \textsc{regver}, not a cross-tool rate ranking: the question we
ask is the safe one, whether \textsc{regver} ever certifies a
non-equivalent pair as equivalent. We evaluate on EqBench-C rather
than the more recent EquiBench~\cite{wei2025equibench} because
EquiBench labels pairs by \emph{full input-output equivalence}
(Definition~\ref{def:equiv}), a relation finer than the
safety-preserving regression equivalence our methodology targets.

\paragraph{Frama-C-Problems.}
Frama-C-Problems~\cite{patnaik2020framacproblems} is a community set
of 51 small C programs ($\sim$20 LoC average), each carrying an
unproven guard assertion and 1--3 ACSL specifications.
AutoSpec~\cite{wen2024autospec} introduced it as a contract-inference
benchmark and Preguss~\cite{wang2025preguss} reused it, so it is the
standing benchmark for this task and both report per-program success
rates on it. We evaluate \textsc{scribe} on all 51 programs.

\paragraph{X509-parser.} X509-parser is a formally verified RTE-free X.509
parser ($>$1000 LoC, manually verified in ACSL by its authors over
five months), against which AutoSpec~\cite{wen2024autospec} reports
synthesized specifications for six functions, selected to span loops
with buffer arithmetic, switch-case control flow, inter-procedural
composition, and mixed-width integer and shift operations. We
evaluate \textsc{scribe} on those same six functions; the
depth-bounded closure of \S\ref{sec:depth-bounded-closure}
(Lemma~\ref{lem:depth}), a standing
part of our C2 contribution rather than an ablation knob, handles
their inter-procedural composition.

\subsection{Baselines}\label{subsec:baselines}

For frame inference (RQ1) we compare against
AutoSpec~\cite{wen2024autospec} (LLM+WP, VC-level feedback),
``Specify What?''~\cite{granberry2024specifywhat}
(LLM+WP+EVA/Pathcrawler prompt augmentation),
Preguss~\cite{wang2025preguss} (LLM+WP+RTE-trace iterative
refinement), and AutoDeduct~\cite{amilon2025autodeduct} (CHC+Eva, no
LLM). We compare against the numbers each tool reports in its
published evaluation, on the inputs it reports them for:
Frama-C-Problems for AutoSpec and Preguss, and the SV-COMP-derived
subset and X509-parser for AutoSpec. We run \textsc{scribe} on those
same inputs and report its success rate and iteration count
alongside theirs. For regression
verification we use EqBench-C only as a lightweight soundness check
on \textsc{regver}, without a numeric cross-tool ranking: the
established equivalence checkers (RVT~\cite{godlin2013rv},
  Reve~\cite{klebanov2018reve}, ARDiff~\cite{badihi2020ardiff},
SymDiff~\cite{lahiri2012symdiff}) decide partial rather than
safety-preserving conditional equivalence, and the ones with EqBench
results (ARDiff and its symbolic-execution successors) report on the
Java subset, so no like-for-like EqBench-C number exists to compare against.
We run \textsc{regver} on this set in two arms, a contract-free baseline and a
treatment arm in which \textsc{scribe} first infers a one-sided contract for
the changed function, and combine them as a sequential portfolio: the baseline
runs first and the treatment arm is invoked only on the pairs it leaves
undecided, the timeout and tool-error tail. Because the escalation gate is the
baseline's own abstention, observable at run time and independent of the
dataset label, the combination is a portfolio and not an oracle, and it
inherits every false positive the contract layer raises on that tail.

\subsection{Experimental Setup}

We evaluate with three LLMs chosen
for diversity: Opus 4.8~\cite{anthropic_claude_opus_4_8}
(closed-source frontier), Kimi K2.6~\cite{moonshotai_kimi_k2_6}
(open-weight frontier), and Qwen3.6-27B~\cite{qwen3.6-27b}
(open-weight, runnable on inexpensive hardware), all at temperature $0$.

For the formal verifier we use ESBMC
with Z3 (due to quantifier support). ESBMC runs under
$k$-induction with no fixed unwinding bound: inductively closed loops
are proved for every iteration and the rest unrolled in full,
completing on finite loops and otherwise timing out. Three limits are
fixed across every comparison run: a per-query ESBMC timeout and a
per-call LLM timeout, both 300s, and at most 5 refinement iterations
per target. ESBMC also runs under a 104GB per-process memory budget,
applied as a Linux control-group limit with swap off, so a run that
exceeds it stops at its memory ceiling rather than growing into swap;
a stopped run is a non-result, never a pass or a failure. A
non-result is a run that returns no verdict: a query timeout, a
memory-ceiling stop, an ESBMC crash or solver error, an unreadable
exit, or an LLM API failure. We report each success rate two ways:
over the cases ESBMC decided, and with every non-result charged as a
failure (to keep the lower bound conservative).

\subsection{Metrics and Mapping to RQs}

Each RQ pairs a metric with the dataset it is measured on.
\begin{itemize}
  \item \textbf{RQ1} (\textcolor{blue}{feedback modality}):
    contract-inference success rate and iterations to convergence on
    Frama-C-Problems and X509-parser, read against AutoSpec's and
    Preguss's published numbers.
  \item \textbf{RQ2} (\textcolor{blue}{tractability}): the round-by-round
    \textsc{judge} tightness verdict (the saturation curve) and pass rate
    of \textsc{scribe}'s partial contract against the warm-start
    \textsc{hardscribe} arm, on Frama-C-Problems.
\end{itemize}

\subsection{Results}

We report RQ1 and RQ2 from the three-model sweep on the inputs the
baselines publish, plus the EqBench-C soundness baseline.

\subsubsection{RQ1 (feedback modality)}

We ran \textsc{scribe} in its
default partial-spec mode on the two inputs the closest baselines report on,
the 51 Frama-C-Problems programs and the X509 parser, across all three
models. Table~\ref{tab:rq1-crosstool} gives the headline. On
Frama-C-Problems, \textsc{scribe} infers a verifier-discharged contract for
most programs on all three models, with the decided and conservative
success rates shown.
Non-results are few and cluster on the hard array and loop cases
(20 across the three models, mostly query timeouts, the rest ESBMC crashes,
unreadable exits, and one LLM API error; no run reached the 104GB ceiling).

Measured the way the baselines count, with every non-result a failure, these
conservative rates clear AutoSpec on every model and reach Preguss's
per-trial rate at the top of the range, from a single deterministic run at
temperature $0$ where Preguss averages three trials. The three tools use
different success oracles: \textit{AutoSpec targets functional
  correctness, Preguss
  freedom from runtime errors, and \textsc{scribe} a contract that passes both
enforcement and replacement}, which the next paragraph shows subsumes
Preguss's condition.

One feature makes the comparison fairer than the differing oracles suggest:
\textsc{scribe}'s enforcement turns on ESBMC's full safety set, covering the
runtime-error class Preguss proves absent, while also checking a functional
postcondition and caller-sufficiency, so a passing \textsc{scribe} result
certifies at least as much as a passing Preguss one.

Table~\ref{tab:rq1-refine} reports the per-step pass rates, with the same
pattern across the three models: enforcement is the binding gate and
replacement nearly always follows once it passes.
postcondition toward one the body satisfies.
Most targets converge in a
single iteration (Figure~\ref{fig:rq1-conv}). On the
X509 parser, \textsc{scribe} verifies 5 of 6 target functions with Opus and
Kimi and 3 of 6 with Qwen, against the 6 of 6 AutoSpec reports; the function
set is similar but not identical, so we read this as close rather than
head-to-head.

\begin{tcolorbox}[mybox,title=RQ1 Answer]
  Model-checker counterexamples alone provide an effective refinement signal
  for contract inference. Across the evaluated benchmarks \textsc{scribe}
  reaches verification rates comparable to prior approaches that additionally
  rely on weakest-precondition reasoning, value analysis, or runtime-error
  annotations, while most contracts converge after a single refinement
  iteration.
\end{tcolorbox}

\begin{table}[t]
  \centering
  \caption{RQ1 contract-inference success}
  \label{tab:rq1-crosstool}
  \begin{threeparttable}
    \begin{tabular}{lcc}
      \toprule
      Tool (oracle) & Frama-C-Problems & X509 \\
      \midrule
      AutoSpec (func.) & 31/51 (60.8\%)            & 6/6 \\
      Preguss (RTE)    & 122/153\tnote{a}\ (79.7\%) & n/a \\
      \midrule
      \textsc{scribe}, Opus 4.8        & 36/51 (70.6\% / 85.7\%)  & 5/6 \\
      \textsc{scribe}, Kimi K2.6       & 40/51 (78.4\% / 85.1\%)  & 5/6 \\
      \textsc{scribe}, Qwen3.6-27B     & 36/51 (70.6\% / 81.8\%)  & 3/6 \\
      \bottomrule
    \end{tabular}
    \begin{tablenotes}
      \footnotesize
    \item[a] Per-trial rate over three trials.
    \end{tablenotes}
  \end{threeparttable}
\end{table}

\begin{table}[t]
  \centering
  \caption{RQ1 per-step pass rates over refinement attempts, per model.}
  \label{tab:rq1-refine}
  \begin{tabular}{lcc}
    \toprule
    Model & Enforce & Replace \\
    \midrule
    Kimi K2.6   & 0.82 & 0.95 \\
    Opus 4.8    & 0.75 & 0.98 \\
    Qwen3.6-27B & 0.75 & 0.93 \\
    \bottomrule
  \end{tabular}
\end{table}

\begin{figure}[t]
  \centering
  \includegraphics[width=\columnwidth]{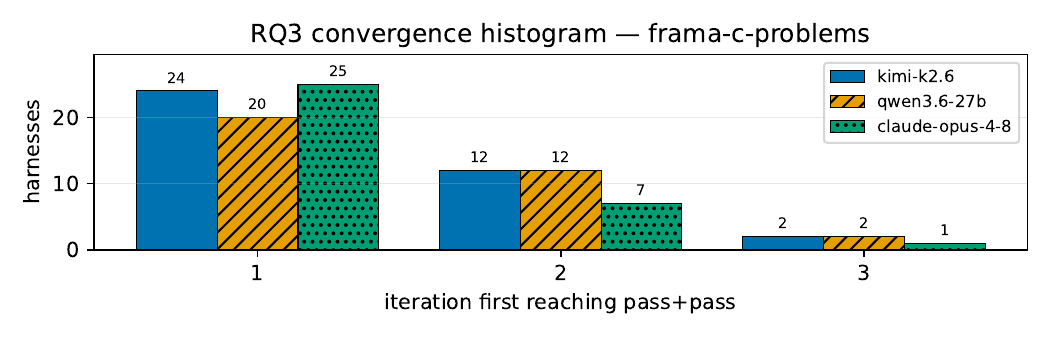}
  \caption{RQ1 convergence histogram per model on Frama-C-Problems: the
    iteration at which each harness first reaches a passing enforce and
  replace. Most targets converge in a single iteration.}
  \label{fig:rq1-conv}
\end{figure}

\subsubsection{RQ2 (tractability)}

RQ1 leaves the loop reliably producing a verifiable partial contract. We
seed the warm-start arm (\textsc{hardscribe}) from each model's partial
contract and apply up to five enforce-gated strengthening rounds, keeping
every accepted round so \textsc{judge} can measure how much tighter each
round is than the seed. Strengthening saturates fast in every model
(Figure~\ref{fig:rq2-sat}): for almost all orderable targets the contract is
already at its tightest at the seed (Table~\ref{tab:rq2}), and where
strengthening helps it does so in the first round or two and then stops, its
round-over-round gain reaching zero by round two or three.

The endpoint comparison agrees (Table~\ref{tab:rq2}): judging each model's
full contract against its own partial seed over the orderable pairs, the two
are equivalent for the large majority of targets and only a few are strictly
tighter after strengthening. The warm-start arm costs almost nothing in
soundness, keeping the last contract that verifies, so its pass rate matches
the partial arm to within one program in every model. The answer to RQ2 is
that caller-sufficiency already captures most of the attainable tightness: a
cheap partial contract is within a few targets of the strongest contract the
loop can reach, a direct validation of the partial-spec stopping point.

Two bounds keep this honest. About 40\% of round-level comparisons involve a
quantified or pointer-dereferencing postcondition \textsc{judge}'s
signature-only harness cannot order; we hold those out and fix the denominator
to the targets orderable in every round, so the saturation we report speaks
only for the contracts \textsc{judge} can compare across all rounds. And
the warm-start arm strengthens within the seed's clause shape, so this rapid
saturation shows little tightness is left within that shape, not that no tighter
contract of a different shape exists, which we leave to the from-scratch
study in \S\ref{sec:limitations}.

\begin{tcolorbox}[mybox,title=RQ2 Answer]
  Continued strengthening recovers little beyond the partial contract:
  caller-sufficiency already captures most of the attainable tightness, and
  the few targets that tighten do so within a round or two before the returns
  plateau, so the partial contract is a sound place to stop.
\end{tcolorbox}

\begin{figure}[t]
  \centering
  \includegraphics[width=\columnwidth]{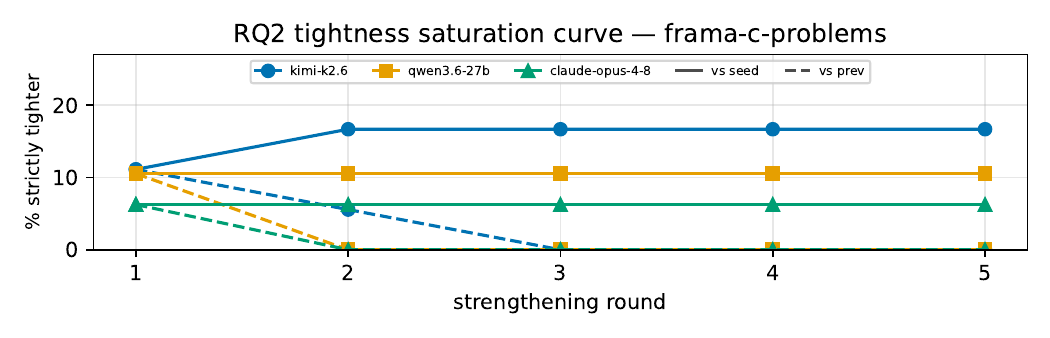}
  \caption{RQ2 tightness saturation on Frama-C-Problems. Per model, the share
    of orderable targets strictly tighter than the partial seed at round $k$
    (solid) and the round-over-round gain over the previous round (dashed). The
    denominator is fixed across rounds to the targets \textsc{judge} orders in
    every round (16 / 18 / 19 for Opus 4.8 / Kimi K2.6 / Qwen3.6-27B), so the
  solid curve is monotone by construction.}
  \label{fig:rq2-sat}
\end{figure}

\begin{table}[t]
  \centering
  \caption{RQ2 tightness endpoint, per model. ``Partial / Full'' are the
  per-arm pass counts (full arm of 50).}
  \label{tab:rq2}
  \begin{tabular}{lccc}
    \toprule
    Model & Sat.\ @0 & Equiv : tighter & Partial / Full \\
    \midrule
    Kimi K2.6     & 43/47 & 13 : 2 & 40 / 41 \\
    Opus 4.8      & 40/42 & 18 : 2 & 36 / 36 \\
    Qwen3.6-27B   & 38/42 & 14 : 4 & 36 / 36 \\
    \bottomrule
  \end{tabular}
\end{table}

\subsubsection{Soundness and reach on EqBench-C}

\textsc{regver} runs the contract-free regression harness on all 272 pairs,
producing a verdict on 70; the rest time out or hit a tool error and are held
out. On the decided pairs it never fabricates an equivalence: 100\% recall,
zero false negatives (Figure~\ref{fig:eqbench}). Some of its false positives
are not over-detection but dataset mislabels, pairs EqBench labels \texttt{Eq}
that are actually non-equivalent; we found nine and filed them upstream
\ifanonymized{}
\textcolor{red}{[REDACTED FOR REVIEW]}
\else{}
\cite{eqbenchmislabels}
\fi, and independent work reports EqBench mislabels as well
\cite{sarker2026quantitativesymbolicpatchimpact}. We draw no cross-tool
ranking (see \S\ref{subsec:baselines}): the comparable checkers decide partial
rather than safety-preserving conditional equivalence, so EqBench-C serves as
a soundness check, not a competitive baseline.

\paragraph{Reach with contracts.} A developer runs the contract-free
baseline first and escalates only the pairs it cannot decide; on that
undecided tail \textsc{scribe} infers a one-sided contract for the
changed function and \textsc{regver} runs again. This is where
contracts earn their keep: they recover 9, 11, and 7 correct decisions
for Opus, Kimi, and Qwen on pairs the baseline left as timeouts or
tool errors, pushing the decided count from 70 to 105 / 106 / 102
(Table~\ref{tab:eqbench-reach}), and every recovered decision is a
real divergence the baseline was too weak to expose. The gain comes
at an expected precision cost: \textit{the inferred contracts
over-approximate, so the recovered tail is false-alarm-heavy}. But
each false alarm lands on a pair the baseline could not decide at
all, a triageable over-approximation rather than a regression.
Soundness holds under the relation we target: no model fabricates an
equivalence, the lone dataset-scored exception being
\texttt{triangularMod} under Qwen, non-equivalent under full
equivalence but equivalent under the safety-preserving conditional
equivalence \textsc{regver} targets, since the two versions agree
wherever the old one terminates; we report it as the dataset scores
it rather than substitute our own oracle. Contracts therefore extend
reach into EqBench-C's hardest cases and fail safe: every decision
they add is correct or a triageable false alarm, never a missed difference.

\begin{table}[t]
  \centering
  \caption{EqBench-C reach. Counts are out of 272 pairs.}
  \label{tab:eqbench-reach}
  \begin{threeparttable}
    \begin{tabular}{lcccc}
      \toprule
      Arm & Decided & Correct & FP & FN \\
      \midrule
      Baseline (contract-free)  & 70  & 56 & 14 & 0 \\
      \midrule
      + contracts, Opus 4.8     & 105 & 65 & 40 & 0 \\
      + contracts, Kimi K2.6    & 106 & 67 & 39 & 0 \\
      + contracts, Qwen3.6-27B  & 102 & 63 & 38 & 1\tnote{a} \\
      \bottomrule
    \end{tabular}
    \begin{tablenotes}
      \footnotesize
    \item[a] \texttt{triangularMod}: false FN under safety-preserving
      conditional equivalence.
    \end{tablenotes}
  \end{threeparttable}
\end{table}

\begin{figure}[t]
  \centering
  \includegraphics[width=\columnwidth]{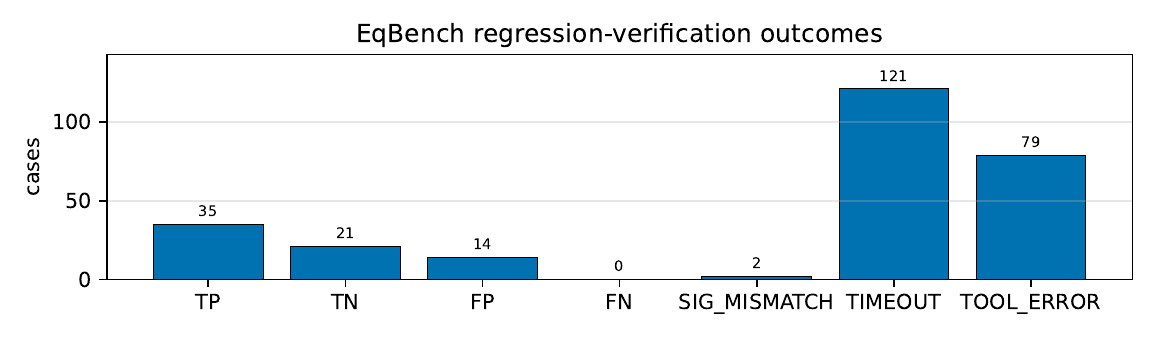}
  \caption{Contract-free baseline regression-verification outcomes on
    EqBench-C. The false-negative bucket (a non-equivalent pair called
  equivalent) is empty: the baseline never fabricates an equivalence.}
  \label{fig:eqbench}
\end{figure}

\subsection{Discussion}

The results cohere around the contributions. RQ1 supports C2, that BMC trace
feedback is enough to drive frame inference: a single model checker driven by
counterexamples infers verifiable contracts competitively with heavier
weakest-precondition pipelines. RQ2 supports the partial-spec stopping point
of Theorem~\ref{thm:t1} (C1), the cheap partial contract already close to the
tightest a strengthening loop reaches, with \textsc{judge} (C3) supplying the
tightness ordering behind that claim and reporting unorderable rounds as a
scope bound. The refinement telemetry shows the mechanism, enforcement the
binding gate and the loop only ever weakening an over-strong postcondition;
and the EqBench-C baseline backs the soundness half of C1 directly, with zero
fabricated equivalences.

\subsection{Threats To Validity}\label{sec:limitations}
\textbf{Construct validity.} The cross-tool RQ1 comparison spans different
success oracles, so we present it as a situating comparison rather than a
head-to-head. On EqBench-C \textsc{regver} runs in two arms against
the dataset's
full-equivalence labels as a soundness check, not a tool ranking; the reported
cascade composes them post hoc, faithfully, since pairs are scored
independently and the escalation gate reads only the baseline's abstention, not
the label. Because \textsc{regver} targets the weaker conditional equivalence, a
pair the dataset scores as a miss may be correct under our notion
(\texttt{triangularMod}); scoring against the dataset regardless, such
mismatches can only count against us. \textsc{judge} (C3)
reports a relative tightness ordering on
admit sets, not whether a contract characterises $f$ completely; we always
pair its verdict with enforcement, so a tighter-under-\textsc{judge} contract
that also enforces is unambiguously a tighter \emph{sound} contract. The C1
guarantee holds under its assumptions on \texttt{assigns} aliasing,
nondeterministic external calls, and termination; outside those it degrades
to an honest unknown rather than a false proof.

\textbf{Internal validity.} \textsc{scribe} runs each model once at
temperature $0$, a single deterministic trajectory rather than an average
over LLM sampling; we mitigate by reporting three models whose refinement
telemetry is near-identical, but a different seed could shift individual
cases. Non-results are excluded from the pass-rate denominator, which the
two-way reporting (\S\ref{sec:eval}) keeps honest since none is a wrong
verdict. On EqBench-C the size and make-up of the undecided tail the treatment
arm escalates on depend on the per-query timeout, fixed at 300s across both
arms. RQ2's warm start tightens only within the seed's clause shape and
cannot reach a structurally different contract; we frame it as the marginal
value of continued strengthening and keep \textsc{judge} strictly post hoc,
so the loop is gated on enforcement, never on the tightness measure it
reports.

\textbf{External validity.} The baseline comparison is one-way (published
numbers, not re-runs), on the same full-denominator basis they report. The
benchmarks are small (Frama-C-Problems averages about seventeen lines, the
X509 study six functions, EqBench-C crafted rather than production code), so
generalisation to large real-world C, and to C++, is future work. The
supported program class is bounded by engineering gaps with known fixes: the
tree-sitter call/include graph misses indirect, virtual, and macro-expanded
calls (an LLVM-IR pipeline with SVF~\cite{suixue2016svf} or libclang closes
this), the \textsc{regver} rename pass cannot rewrite identifiers in macro
bodies (Coccinelle~\cite{padioleau2008coccinelle}), and its side-effect
comparator sees only writes through pointer parameters
(bi-abduction~\cite{calcagno2009biabduction} and Infer-style footprints
extend it). \textsc{judge}'s tightness figures exist only on its
primitive-typed signature class. These are engineering scope bounds, not open
research questions.

\section{Conclusion}\label{sec:conclusion}

We asked whether a partial, caller-sufficient contract, rather than a
full behavioral specification, is enough to verify regression
equivalence. Theorem~\ref{thm:t1} answers yes for safety-preserving
conditional equivalence: a one-sided contract that abstracts only the
old version, scoped to the property worth preserving, is sound for the
regression check, and a depth-bounded closure (Lemma~\ref{lem:depth})
bounds the contract closure so the proof scales. The target of
inference is therefore not a
complete specification but the weakest contract a caller needs.

An end-to-end system (\textsc{scribe}, \textsc{contractor},
\textsc{regver}) infers these contracts automatically from the model
checker's own counterexamples, with no separate specification step. On
Frama-C-Problems and the ANSSI X509 parser this reaches a verification
rate comparable to the weakest-precondition tools AutoSpec and Preguss
while certifying a stronger property, enforcement plus
caller-sufficiency with runtime-error freedom included, which answers
RQ1: counterexample feedback alone is a sufficient signal for frame
inference. Strengthening each inferred contract past caller-sufficiency
barely tightens it, the partial and warm-started contracts coming out
\textsc{judge}-equivalent for the large majority of comparable targets,
so the partial-spec stopping point of Theorem~\ref{thm:t1} is close to
the practical ceiling rather than a compromise (RQ2).

A decision procedure for contract subsumption (\textsc{judge}) reports
these tightness results as a real set-inclusion ordering rather than a
gameable clause count. The regression check underneath is sound by
construction: on EqBench-C \textsc{regver} never fabricates an
equivalence, and run as a soundness oracle it surfaced nine pairs the
suite mislabels as equivalent, more than concurrent work reports.
Together these establish that sound, modular regression verification
can be driven by a model checker's own counterexamples and a partial
contract, with no hand-written specification.

Two directions follow naturally. \textbf{LLM-driven C++ contract
inference:} every published LLM contract-inference tool we are aware of
is C-only, so inferring functional contracts for C++ operator
overloads, initializer-list constructors, RAII destructors, and
template instantiations over the standard library is an unaddressed
problem and the natural next target for this line of work.
\textbf{Regression verification across signature changes:}
RVT~\cite{godlin2013rv}, Reve~\cite{klebanov2018reve},
SymDiff~\cite{lahiri2012symdiff}, and \textsc{regver} all assume the
old and new versions share a signature, so aligning refactored
signatures (parameter reordering, splitting, merging, type changes) for
verification-grade equivalence is an open problem, naturally cast as a
relational extension of Theorem~\ref{thm:t1} in which the parameter
alignment is itself part of the inference.

\section{Acknowledgements}

The authors acknowledge the use of generative AI tools in both the
development of the research tooling and the preparation of this
manuscript. All AI-assisted content was subsequently reviewed, revised,
and validated by the authors, who assume full responsibility for the
integrity and accuracy of the final publication.

\clearpage

\bibliographystyle{IEEEtran}
\bibliography{refs}

\end{document}